\newcommand{\deltab}{\bm{\delta}}
\newcommand{\Eb}{\mathbf{E}}
\newcommand{\Ub}{\mathbf{U}}
\newcommand{\Xb}{\mathbf{X}}
\newcommand{\Wb}{\mathbf{W}}
\newcommand{\Zb}{\mathbf{Z}}
\newcommand{\IC}{\mathcal{I}}
\newcommand{\RC}{\mathcal{R}}
\newcommand{\NC}{\mathcal{N}}
\newcommand{\eq}{&\hspace{-0.5em}=\hspace{-0.5em}&}
\newcommand{\tr}{{\rm{tr}}}
\newtheorem{theorem}{Theorem}
\newtheorem{definition}{Definition}
\newtheorem{corollary}{Corollary}
\newtheorem{lemma}{Lemma}
\newtheorem{remark}{Remark}
\newtheorem{proposition}{Proposition}
\title{\LARGE \bf
Value of Information in Networked Control Systems Subject to Delay
}
\author{ Siyi Wang, Qingchen Liu, Precious Ugo Abara, John S. Baras and Sandra Hirche
\thanks{*This work was  funded by the German Research Foundation (DFG) under the grant number 315177489 as part of the SPP 1914 (CPN). The work of Q. Liu was in addition supported by the European Union’s Horizon 2020 research and innovation programme under the Marie Skłodowska-Curie Grant 754462. The work of J.~S.~Baras was in addition partially supported by ONR grant N00014- 17-1-2622.} 
\thanks{
 Siyi Wang, Qingchen Liu, Precious Ugo Abara and Sandra Hirche are with the Chair of
Information-oriented Control (ITR), Technical University of Munich, Germany, 
        {\tt\small         \{siyi.wang, qingchen.liu, ugoabara, hirche\}@tum.de}       }
\thanks{
John S. Baras is with the Department of Electrical
\& Computer Engineering, Institute for Systems Research, University of
Maryland, USA, 
        {\tt\small \{baras\}@umd.edu}
        }%
}
\begin{document}

\maketitle
\thispagestyle{empty}
\pagestyle{empty}


\begin{abstract}
In this paper, we study the trade-off between the transmission cost and the control performance of a networked control system subject to  network-induced delay. 
Within the linear–quadratic–Gaussian (LQG) framework, the joint design of control policy and networking strategy is decomposed into separate optimization problems. Based on the trade-off analysis, a  delay-dependent Value-of-Information (VoI)    metric which quantifies the value of transmitting a data packet is introduced. The VoI enables the decision-makers embedded in subsystems to design the triggering policy. The proposed scalable VoI inherits the task criticality of the existing VoI metric. Additionally, the sensitivity to the system parameters such as information freshness and network delays is directly derivable. The VoI-based scheduling policy is shown to outperform the periodical triggering policy and the  Age-of-Information (AoI) based policy for network control systems under transmission delay. The effectiveness of the constructed VoI with arbitrary network delay is validated through numerical simulations.

\end{abstract}

\section{INTRODUCTION}

Networked control systems (NCSs) are generally referred to multiple or even a large number of plants that are controlled by computational algorithms and supported by a wired or wireless communication network providing information exchange \cite{antsaklis2007special,walsh2001scheduling, park2017wireless}. Their application domains are multi-fold, including for example smart energy grids \cite{persis2018power}, robotic systems \cite{bullo2009distributed} and autonomous production lines \cite{scholz2009modelling}. From a theoretical perspective, NCSs can be seen as realizations of a scenario in which multiple feedback control loops are closed over a shared communication network. In data scheduling of NCSs, the event-triggered schemes impel a data transmission  only when a pre-designed  triggering condition is satisfied. A plethora of works e.g. \cite{lunze2010state, wang2011event}, have shown that event-triggered schemes performs well in reducing the network communication resource consumption while guaranteeing similar control performance compared to widely-used time-triggered scheme \cite{heemels2012introduction}. In NCSs, two main layers (control layer and communication layer) strongly influence the performance of each other and face heterogeneous inter-layer couplings such as network-induced delay, packet loss and quantization. 
The inevitable presence of transmission delay in NCSs indeed influences the system performance, which has drawn significant attention in recent studies \cite{heemels2010networked,yue2013delay,liu2019survey}.\\  
\indent 
Age-of-Information (AoI) is a recently introduced metric which is capable of coordinating communication resource allocation by minimizing the information freshness of each subsystem. AoI captures the information freshness at the controller and is scalable for application in large-scale systems \cite{kosta2017age}. By scalable we mean that the AoI is computationally tractable and implementable on a NCS with a large number of subsystems. However, as shown in \cite{champati2019performance, ayan2019age}, minimizing average information freshness does not necessarily lead to individual control performance satisfaction, and might also result in an undesirable collective performance.  \\
\indent 
Compared to AoI in NCSs,  Value-of-Information (VoI) is quantified as the
variation in a  value
function with respect to a piece of information about the state of the
process available to the controller  \cite{soleymani2021value,molin2019scheduling}. In \cite{soleymani2021value}, it has been shown that the VoI-based scheduling policy is an optimal policy, by assuming that network-induced effects, such as transmission delay, quantization or packet dropouts, are negligible. However, this assumption is definitely not valid in real-world applications. Moreover, another disadvantage of the previous VoI in \cite{soleymani2021value} is its computational complexity -- in particular in comparison to AoI. \\
\indent 
In this paper, we extend the VoI concept to NCSs subject to arbitrary transmission delay. Firstly, under some mild assumptions on the triggering policy, the initial joint design problem is decomposed into two sub-problems $i)$ the optimal control policy design, and $ii)$ the optimal triggering policy design. By addressing the trade-off between LQG performance and communication cost and assuming that the triggering decisions only depend on primitive random variables, the solution to the optimal control problem is shown to be the certainty equivalent controller. Analyzing the optimal triggering policy design problem, we obtain a VoI metric which  is described as a measure of the urgency of data transmission between the event-trigger and the remote controller. The VoI serves as the triggering condition for NCS closed over a communication network. 
Our constructed  VoI function sufficiently accounts for individual task requirements of networked control loops as previous VoI while capturing the influence induced by transmission delay. We prove that the delay-dependent VoI-based scheduling policy is optimal, and improves the LQG control performance compared to both periodic scheduling and AoI-based scheduling policies. To reduce the  computation burden in solving the scheduling problem using dynamic programming, we provide a simplified VoI proxy function based on an approximation approach. \\
\indent 
The remainder of this paper is structured as follows. Section \ref{sec:preliminaries} introduces the system model and the problem formulation. Section \ref{sec:main result} presents the main result on constructing delay-dependent VoI and its performance analysis. Section \ref{sec:simulation}  illustrates our results through simulation examples.  \\
\indent
\textbf{Notations:}  In this study, $\Eb[\cdot]$ and $\Eb[\cdot\mid\cdot]$ denote the expected value and the conditional expectation, respectively. Let $x \sim \NC(\mu, C_{x})$ represent Gaussian random variable $x$ with mean $\mu$ and covariance matrix $C_{x}$. A sequence of a random process $\{x_{k}\}_{k \ge 0}$ is denoted as $\Xb_{k}=\{ x_{0}, \dots, x_{k} \}$ for its complete history  up to time index $k$.
\section{PRELIMINARIES}\label{sec:preliminaries}

\subsection{System model}
We consider a networked control system in which the feedback loop is coupled through a communication network. The stochastic discrete-time process to be controlled is described by a discrete stochastic difference equation, which is assumed to be linear and time-invariant. The stochastic difference equation is given by 
\begin{eqnarray}\label{eq:plant}
x_{k+1}  \eq A x_{k}  +B u_{k}  +w_{k} 
\end{eqnarray}
where system state $x_k \in \mathbb{R}^{n}$, control signal $u_{k}\in\mathbb{R}^m$, $A\in \mathbb{R}^{n\times n}$ and $B \in \mathbb{R}^{n\times m}$. The variables $x_k$ and $u_k$ denote the system state and the control input, respectively. The system matrix pair $(A,B)$ is assumed to be controllable. The process noise $w_k \sim \NC(0, W)$ is independent identically distributed (i.i.d) with zero mean and  postive semidefinite covariance. The initial state  $x_0 \sim \NC(0, R_{0}) $  is a random vector with zero mean and finite covariance. The random variables $x_0$ and $w_{k}$ are assumed to be statistically independent for each $k$. Across this paper, we call $x_0$ and $w_k$ the primitive random variables of the system.

\subsection{Network model and information structure}

The control system is equipped with a scheduler (event-trigger), to determine whether allows the state $x_{k} $ to be transmitted via the communication network at each time instant $k$. See Fig.~\ref{fig:system architecture} for a graphical illustration. 
The transmission triggering variable $\delta_{k}$ of the scheduler takes value from $\{0, 1\}$, and is given by
\begin{eqnarray}\label{eq:triggering policy}
   \delta_{k} = \gamma (\IC_{k}^{e}) =
   \left\{ 
    \begin{array}{ll}
        1 &  {\rm{transmission~occurs}}\\
         0& {\rm{otherwise}}
    \end{array}\right.
\end{eqnarray}
where $\gamma(\cdot)$ denotes the triggering policy of the control system and $\IC_{k}^{e}$ denotes the information set for the event-trigger. Its formal definition is provided in \eqref{eq:event-trigger infor}.  
\begin{figure}[t]
    \centering
    \includegraphics[width=0.45\textwidth]{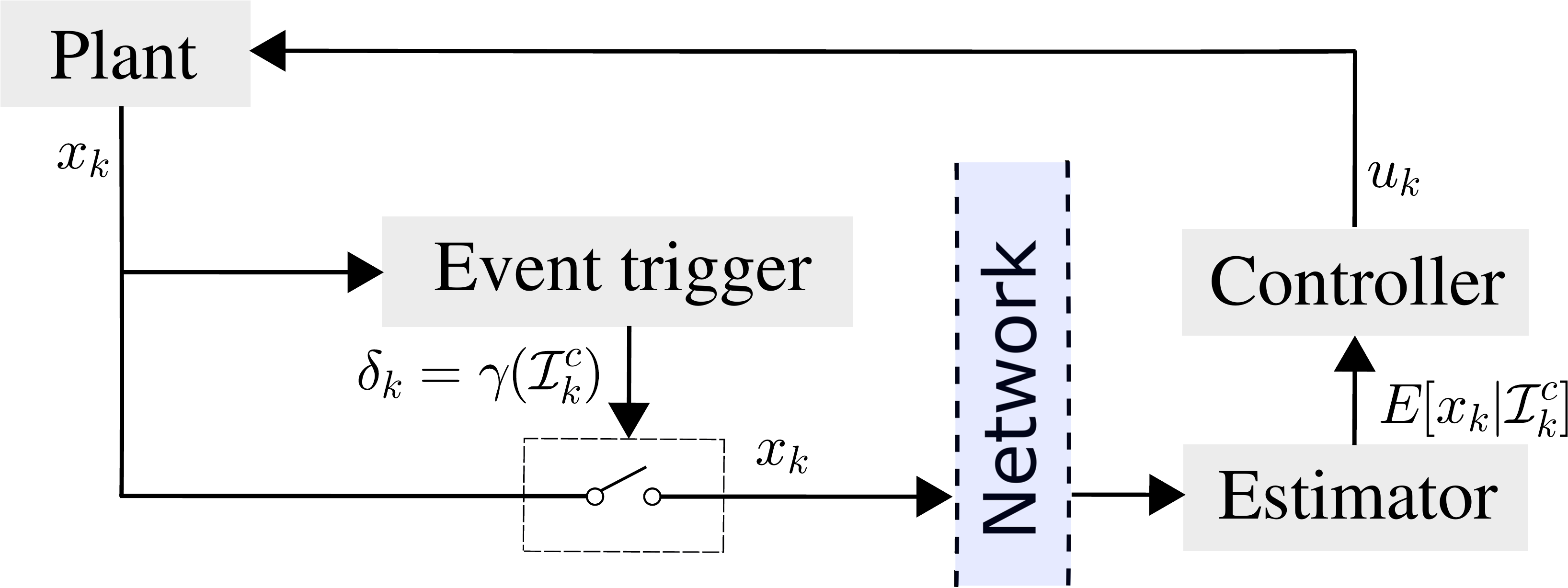}
    \caption{A NCS with shared communication channel.}
    \label{fig:system architecture}
\end{figure}
We assume that the data transmission is subject to a $\tau $-step delay (see Definition~\ref{definition:AoI}), which is induced by the communication network. To capture this delay, we assume that the controller keeps a track of the time instant it received the latest packet. This information updating time instant is a function of the triggering variables. 
By using $ \delta_{k}$, the information updating time at the controller side is described by $c(k) = \max \{ t \vert \delta_{t-\tau} =1, t \le k \}$. The transmission timeline in the networked control system is illustrated in Fig.~\ref{fig:transmission timeline}. To quantify the information freshness of the data packet at the controller side, we introduce the concept of Age-of-Information (AoI) as follows.  
\begin{definition}\label{definition:AoI}
The delay-dependent AoI  is defined as  
\begin{eqnarray}
\Delta_{k} = k- s(k)  
\end{eqnarray} 
where 
\begin{eqnarray}\label{eq:sk}
 s(k) = c(k)-\tau
\end{eqnarray} 
is the generating time instant of the latest information update of the controller.
\end{definition}

At time instant $k$, the controller computes its action based on the information updates. At the sensor side, the event-trigger determines whether to transmit the current state $x_{k}$ to the network according to the scheduling policy $\gamma_{k}(\IC_{k}^{e})$. We will detail the design of our VoI-based policy in Section III. The information received at the controller side, at time instant $k$, is denoted as   
\begin{eqnarray*}
z_{k} = \left\{ \begin{array}{ll}
x_{s(k)}     &  {\rm{if}} ~\delta_{k-\tau} =1,\\
\emptyset     & {\rm{otherwise,}}
\end{array}\right.
\end{eqnarray*}
where the time index $s(k)$ is defined in \eqref{eq:sk}, and defining $\delta_{-\tau}=\ldots=\delta_{-1}=0$. We denote the observation history of the controller until time $k$ as $\Zb_{k}$. The information set at the controller side is denoted as 
\begin{eqnarray}\label{eq:controller infor}
\IC_{k}^{c} = \{ \Zb_{k},  \Ub_{k-1} \},
\end{eqnarray}
with the initial information $\IC_{0}^{c} = \{z_{0} \}$, and $\Ub_{k-1}$ denotes the control history up to time instant $k-1$. 
The control history $\Ub_{k-1}$ up to time instant $k-1$ is also available to the event-trigger. It follows that, the information set at the event-trigger is denoted as 
\begin{eqnarray}\label{eq:event-trigger infor}
\IC_{k}^{e} = \{ \Xb_{k},  \Ub_{k-1}, \deltab_{k-1} \}, 
\end{eqnarray}
where  $\Xb_{k}$ and $\deltab_{k-1}$ denote the system state history up to time $k$ and triggering decision history up to time $k-1$, respectively. Obviously the initial information set $\IC_{0}^{e} = \{x_{0},\delta_{0}\}$ because of the empty control history.
\begin{figure}[t]
    \centering
    \includegraphics[width=0.45\textwidth]{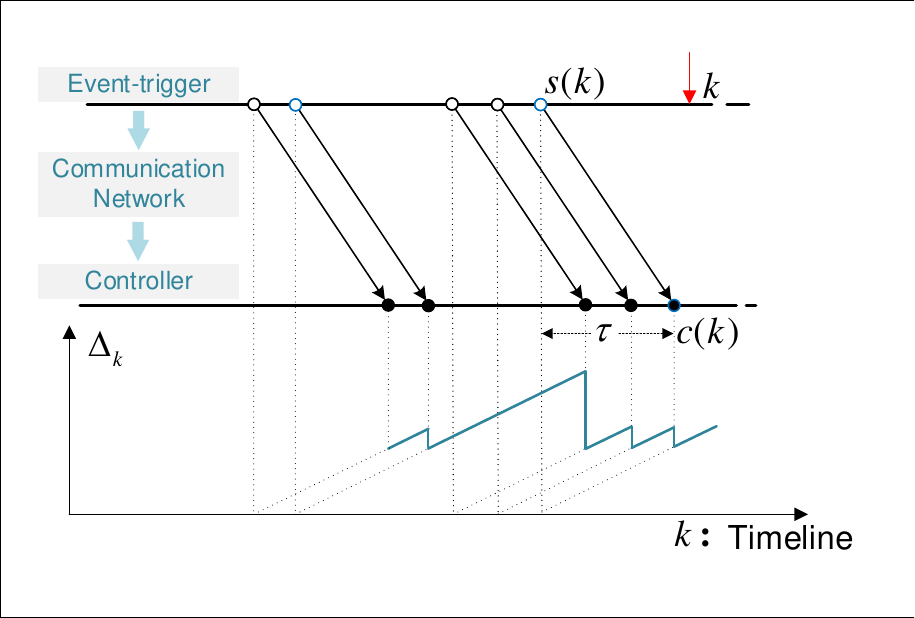}
    \caption{Transmission timeline} 
    \label{fig:transmission timeline}
\end{figure}

\subsection{Problem statement} 
In this study, we aim to find the optimal joint co-design of control and scheduling policy which solves the following optimization problem: 
\begin{eqnarray} \label{eq:original opt} 
     \min_{ f, \gamma}   \Psi(f,\gamma) =  J(f, \gamma)  + \RC(\gamma),
\end{eqnarray}
where  $f$ denotes the control policy and $\gamma$ denotes the scheduling policy (defined in \eqref{eq:triggering policy}),
$J(f, \gamma)$ and  $\RC(\gamma)$  represent the control cost function and communication cost function, respectively. The control cost function is given by
\begin{eqnarray}\label{eq:LQG}
J = \frac{1}{T+1}\Eb\big[ \sum_{k=0}^{T } \big(\hspace{-1em}&x_{k}^\top Q_{k}x_{k} + u_{k}^\top R_{k} u_{k}\big) \nonumber \\
& + x_{T+1}^\top Q_{T+1}x_{T+1}  \big],  
\end{eqnarray} 
where the matrix $Q_{k}$ is positive semidefinite,  $R_k$ is positive definite, the matrix pair $(A, Q_k^{\frac{1}{2}})$ is detectable, with $Q_k = \left(Q_k^{\frac{1}{2}}\right)^\top Q_k^{\frac{1}{2}}$. The communication cost function we considered is designed as:
\begin{eqnarray*}
\RC \eq \frac{1}{T+1}\Eb\big[\sum_{k=0}^{T}\theta_{k} \delta_{k} \big],
\end{eqnarray*}
where $\theta_{k} $ denotes the single transmission cost at time instant $k$. 

\section{Main Result} \label{sec:main result}
Before presenting the main result in this work, we provide the following supporting result. \\
\indent As proved in \cite{molin2013on}, a pair of joint control and communication policies with a certainty equivalence controller is a dominating class of policies.
Thus, within the dominating policies, the local sub-problems can be decomposed into: $i)$ the design of certainty equivalence controller; and $ii)$ the design of optimal triggering policy \cite{molin2014price}. 
In the following, we provide a condition on the scheduler to guarantee
certainty equivalence of the controller. 

\begin{lemma}[\cite{ramesh2011on}]\label{lemma:certainty eq ctrl}
Let the triggering policy be a function of the primitive random variables, i.e., 
$\delta_k  = \gamma_k \left( x_0, \Wb_{k-1}\right)$.    
Then the optimal controller which minimizes the finite horizon LQG problem \eqref{eq:LQG} is certainty equivalent, i.e.,
\begin{eqnarray}\label{eq:controller}
u_{k}^{\ast}= f^{\ast}( \IC_{k}^{c})= L_{k} \Eb[x_{k}\mid \IC_{k}^{c}],
\end{eqnarray}
with $L_{k} = -(\Lambda_{k})^{-1}B^{\top}P_{k+1}A $ and $P_{k}$ being the solution of the following algebraic Riccati equation:
\begin{eqnarray}\label{eq:Riccati matrix} 
P_{k} \eq  Q_{k}+A^{\top}(P_{k+1}-P_{k+1}B(\Lambda_{k})^{-1}B^{\top}P_{k+1})A,  \nonumber \\
P_{T} \eq  Q_{T}, \quad \Lambda_{k} =  R_{k}+B^{\top}P_{k+1}B.
\end{eqnarray} \hfill $\blacksquare$
\end{lemma}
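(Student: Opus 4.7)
The plan is to prove the lemma by dynamic programming with backward induction, following the standard LQG template adapted to the event-triggered information pattern. I would define the cost-to-go $V_{k}^{i}(\IC_{k}^{i,c})$ as the optimal remaining cost from time $k$ onward conditional on $\IC_{k}^{i,c}$, and posit the quadratic ansatz $V_{k}^{i} = \Eb[(x_{k}^{i})^{\top} P_{k}^{i} x_{k}^{i} \mid \IC_{k}^{i,c}] + \xi_{k}^{i}$, where $P_{k}^{i}$ is the Riccati iterate in \eqref{eq:Riccati matrix} and $\xi_{k}^{i}$ collects terms independent of the control choices. The terminal condition $P_{T}^{i} = Q_{T}^{i}$ supplies the base case.

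The decisive structural ingredient is the no-dual-effect hypothesis $\delta_{k}^{i} = \gamma_{k}^{i}(x_{0}^{i}, \Wb_{k-1}^{i})$. Under this assumption, the triggering decisions, the update instants $s^{i}(k)$, and the received samples $z_{k}^{i} = x_{s^{i}(k)}^{i}$ are driven entirely by primitive random variables, so the $\sigma$-algebra generated by $\Zb_{k}^{i}$ is unaffected by the past control actions. Together with $\Ub_{k-1}^{i}$ being known and deterministic given past information, this implies that $\IC_{k+1}^{i,c}$ depends on $u_{k}^{i}$ only through the additive term $B_{i} u_{k}^{i}$ in the state propagation, and that $\Eb[w_{k}^{i} \mid \IC_{k}^{i,c}] = 0$. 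It also ensures that the estimation error $x_{k}^{i} - \Eb[x_{k}^{i} \mid \IC_{k}^{i,c}]$ has a conditional covariance that does not depend on $\Ub_{k-1}^{i}$, which is what allows the separation between estimation and control.

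For the inductive step at stage $k$, I would substitute $x_{k+1}^{i} = A_{i}x_{k}^{i} + B_{i}u_{k}^{i} + w_{k}^{i}$ into the Bellman recursion, expand $(x_{k+1}^{i})^{\top} P_{k+1}^{i} x_{k+1}^{i}$, and use the tower property to collapse the quadratic in $x_{k}^{i}$ and the cross term using $\Eb[x_{k}^{i} \mid \IC_{k}^{i,c}]$. Completing the square in $u_{k}^{i}$ against the coefficient $\Lambda_{k}^{i} = R_{k}^{i} + B_{i}^{\top} P_{k+1}^{i} B_{i}$ yields the minimizer $u_{k}^{i,\ast} = -(\Lambda_{k}^{i})^{-1} B_{i}^{\top} P_{k+1}^{i} A_{i} \Eb[x_{k}^{i} \mid \IC_{k}^{i,c}] = L_{k}^{i} \Eb[x_{k}^{i} \mid \IC_{k}^{i,c}]$, while the residual reassembles into the inductive form at stage $k$ with $P_{k}^{i}$ obeying \eqref{eq:Riccati matrix}.

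The main obstacle I anticipate is rigorously justifying that the leftover term produced by completing the square, namely the expected quadratic in the estimation error and the noise, is genuinely control-free, so that it can be absorbed into $\xi_{k}^{i}$ without disturbing the induction. This step is where the no-dual-effect assumption is indispensable: without it, the controller could in principle shape the conditional distribution of $x_{k}^{i}$ given $\IC_{k}^{i,c}$ through signaling in past inputs, and the certainty-equivalent form of $u_{k}^{i,\ast}$ would no longer be optimal. Invoking the dominating-policies reduction of \cite{molin2013on} at the outset sidesteps this potential signaling issue and lets the backward-induction computation go through cleanly.
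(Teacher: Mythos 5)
The paper does not prove this lemma at all: it is stated as a known result and attributed to \cite{ramesh2011on,molin2013on}, so there is no in-paper proof to compare against line by line. Your backward-induction sketch is, however, exactly the standard argument underlying those references: quadratic value-function ansatz $\Eb[(x_k^i)^\top P_k^i x_k^i\mid \IC_k^{i,c}]+\xi_k^i$, completion of the square against $\Lambda_k^i$, and the observation that when $\delta_k^i=\gamma_k^i(x_0^i,\Wb_{k-1}^i)$ the received samples and update instants are functions of primitive randomness only, so the conditional error covariance is control-independent and the leftover terms are absorbed into $\xi_k^i$. That is the right proof and the right place to use the no-dual-effect hypothesis. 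One small conceptual correction: the dominating-policies reduction of \cite{molin2013on} is not something you need to ``invoke at the outset'' here --- it is the separate (and stronger) result used elsewhere in the paper to justify restricting attention to triggering laws of the stated form in the first place; once the lemma's hypothesis is assumed, the signaling issue is excluded by assumption and your dynamic-programming computation closes on its own. Also note a purely cosmetic indexing mismatch inherited from the paper: the cost \eqref{eq:LQG} has a terminal term at $T+1$ while \eqref{eq:Riccati matrix} states $P_T^i=Q_T^i$; your induction should take the terminal condition consistent with the cost's terminal stage, but this does not affect the structure of your argument.
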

Lemma~\ref{lemma:certainty eq ctrl} decomposes the co-design problem \eqref{eq:original opt}, and therefore the remaining problem is how to find the optimal triggering policy, which is detailed below.
For notation simplicity, the subscript $i$ is omitted from the next subsection. 

\subsection{System dynamics reparameterization}
We design the estimator at the controller-side as
\begin{eqnarray}\label{eq:estimator}
  \Eb[x_{k}  \mid \IC_{k}^{c}]  =  A^{\Delta_{k}}x_{s(k)} + \sum_{r=1}^{\Delta_{k}}A^{r-1}Bu_{k-r},
\end{eqnarray}
where $u_{k}$ is defined in \eqref{eq:controller} and $s(k)$ is defined in \eqref{eq:sk}. The estimation error is defined as  
\begin{eqnarray}\label{eq:est error}
e_{k}= x_{k} - \Eb[x_{k}\mid \IC_{k}^{c}]
\end{eqnarray}
Similarly, starting from the latest triggering time instant $s(k)$, the state dynamics \eqref{eq:plant} can be further written as  \begin{eqnarray}\label{eq:re plant}
      x_{k} \eq  A^{\Delta_{k}}x_{s(k)} + \sum_{r=1}^{\Delta_{k}}A^{r-1}\big(Bu_{k-r} +w_{k-r}\big).
\end{eqnarray}
By substituting \eqref{eq:estimator} and system dynamics \eqref{eq:re plant} into the error function \eqref{eq:est error}, the estimation error is obtained as  
\begin{eqnarray}\label{eq:error repara}
    e_{k} =  \sum_{r=1}^{\Delta_{k}}A^{r-1}w_{k-r},
\end{eqnarray}
which will be served as a key component in designing the VoI-based scheduling policy.\\
\indent
In order to characterize how the triggering policy influences the control performance, the estimation error \eqref{eq:error repara} is reparameterized with respect to the triggering variables. 
Note that the value of $\Delta_{k+\tau}$ and error dynamics \eqref{eq:error repara} are determined by the decisions up to time instant $k$. 
The AoI expression with respect to decision $\delta_{k}$ at time instant $k+\tau$ is given as 
\begin{eqnarray*} 
    \Delta_{k+\tau} = (1-\delta_{k})(\Delta_{k+\tau-1}+1) +\delta_{k}\tau.
\end{eqnarray*}
For illustrative purpose, we provide an example. If the triggering occurs at $k-1$, while not at $k$, then $\Delta_{k+\tau}=\tau+1$. Besides, if no triggering occurs during $[k-\tau, k]$, then $\Delta_{k+\tau} =\Delta_{k}+\tau$. 
Accordingly, the expression of $e_{k+\tau}$ with respect to decision $\delta_{k}$ is denoted as
\begin{eqnarray}\label{eq:one step Gaussian}
       e_{k+\tau}  \eq  (1-\delta_{k}) (Ae_{k+\tau-1} +  w_{k+\tau-1})   \nonumber \\
       &&+ \delta_{k}\sum_{r=1}^{\tau}A^{r-1}w_{k+\tau-r} \nonumber\\
       \eq \sum_{r=1}^{\Delta_{k+\tau}}A^{r-1}w_{k+\tau-r}.
\end{eqnarray}

\subsection{VoI Construction}
We formally define the VoI as follows 
\begin{eqnarray}
\label{eq:VoI1}
    {\rm{VoI}}_{k}:=V_{k}(\IC_{k}^{e})\mid_{\delta_{k}=0} - V_{k}(\IC_{k}^{e})\mid_{\delta_{k}=1}.
\end{eqnarray}
Equation \eqref{eq:VoI1} indicates the value  that is assigned to the reduction of uncertainty from the decision maker’s perspective given a measurement update \cite{soleymani2021value}, where
$V_{k}(\IC_{k}^{e})$ denotes the value function in the optimization problem from the viewpoint of the event-trigger. Now we are ready to present our main result. 
\begin{theorem}\label{theorem:VoI}
Let the triggering policy be a function of the primitive random variables, i.e., $\delta_k = \gamma_k\left( x_0, \Wb_{k-1}\right)$. And let the optimal control policy be given by \eqref{eq:controller}. The optimal scheduling policy which minimizes the optimization problem \eqref{eq:LQG} is the VoI-based policy given by
\begin{eqnarray}\label{eq:VoI}
 \delta_{k}^{\ast} = \mathbf{1}_{{\rm{VoI}}_{k}> 0 } =      \left\{  \begin{array}{ll}
    1    & {\rm if}~{\rm VoI}_{k}>0,  \\
     0   & {\rm otherwise,}
   \end{array} \right. 
\end{eqnarray} 
where ${\rm{VoI}}_{k}$ is the VoI at time instant $k$ expressed as
\begin{eqnarray}\label{eq:VoI expression}
{\rm{VoI}}_{k}  =  -\theta_{k} +   \tr(\Gamma_{k+\tau}\Phi(\Delta_{k+\tau})    +\rho_{k},
\end{eqnarray} 
where $\Phi(\Delta_{k+\tau}) = \Eb[  \sum_{r=\tau+1}^{\Delta_{k+\tau-1}+1}(A^{r-1})^{\top}A^{r-1}\|w_{k+\tau-r}\|^{2}] $ with  $\Gamma_{k} = L_{k}^{\top}\Lambda_{k}L_{k}$.  Besides, the $\rho_{k}$ is expressed as 
\begin{eqnarray}\label{eq:rho}
 \rho_{k} =  \Eb[V_{k+1} \mid \IC_{k}^{e},~\delta_{k}=0 ] - \Eb[V_{k+1} \mid \IC_{k}^{e},~\delta_{k}=1 ] 
\end{eqnarray}
with $V_{k}  =  \Eb[\sum_{t=k}^{T-\tau} \theta_{t}\delta_{t} +    e_{t+\tau}^{\top}\Gamma_{t+\tau}e_{t+\tau}   \mid \IC_{k}^{e} ]$.  
\end{theorem}
\begin{proof}
By substituting the algebraic Riccati equation \eqref{eq:Riccati matrix} into the optimization cost function \eqref{eq:original opt}, we can rewrite \eqref{eq:original opt} as 
\begin{eqnarray}
\label{eq:LQG recursive}
    \Psi (f,\gamma ) = \frac{1}{T+1}\hspace{-2.5em}& \Eb[x_{0}^{\top}P_{0}x_{0} + \sum_{k=0}^{T}  w_{k}^{\top}P_{k+1}w_{k} \nonumber \\
    &+(u_{k}+L_{k}x_{k})^{\top}\Gamma_{k}(u_{k}+L_{k}x_{k}) + \theta_{k}\delta_{k} ] 
\end{eqnarray}
Inserting the certainty equivalence controller \eqref{eq:controller} into the cost function \eqref{eq:LQG recursive} results in
\begin{eqnarray} \label{eq:LQG error}
\Psi(f^{\ast},\gamma)  \eq  \frac{1}{T+1}
 \Eb[x_{0}^{\top}P_{0}x_{0} +\sum_{k=0}^{T}   w_{k}^{\top}P_{k+1}w_{k}\nonumber \\ &&\hspace{7.3em}  +   e_{k}^{\top}\Gamma_{k}e_{k}+ \theta_{k}\delta_{k}].
\end{eqnarray}
where $\Gamma_{k} = L_{k}^{\top}\Lambda_{k}L_{k}$. 
We define a cost-to-go function at time $k$ for \eqref{eq:LQG error} as 
\begin{eqnarray}\label{eq:Vk}
V_{k}  =  \Eb[\sum_{t=k}^{T-\tau} \theta_{t}\delta_{t} +    e_{t+\tau}^{\top}\Gamma_{t+\tau}e_{t+\tau}   \mid \IC_{k}^{e} ],
\end{eqnarray}
where $e_{t}$, for $0 \le t < \tau $, is independent of the triggering policy since no packet arrives at the controller during $t \in [0,\tau)$ due to the transmission delay. It is straightforward to re-write $V_{k}$ as 
\begin{eqnarray} \label{eq:Vk 1}
V_{k} =  \Eb[ \theta_{k}\delta_{k} +    e_{k+\tau}^{\top}\Gamma_{k+\tau}e_{k+\tau} +    V_{k+1}  \mid \IC_{k}^{e}].
\end{eqnarray} 
The minimizer $\delta_{k}^{\ast}  $ of cost-to-go function \eqref{eq:Vk 1} is obtained as $\delta_{k}^{\ast} = \mathbf{1}_{{\rm{VoI}}_{k}> 0 }$ with ${\rm{VoI}}_{k} $ defined in \eqref{eq:VoI1}. 
Regarding the term $\Eb[    e_{k+\tau}^{\top}\Gamma_{k+\tau}e_{k+\tau}   \mid \IC_{k}^{e}]$ in stage cost of $V_{k}$,  
by substituting \eqref{eq:one step Gaussian}, it can be further written as  
\begin{eqnarray}\label{eq:stage cost phi}
&&\Eb[
 e_{k+\tau}^{\top}\Gamma_{k+\tau}e_{k+\tau} \mid \IC_{k}^{e},~\delta_{k}=0 ] \nonumber \\
 && -  \Eb[
 e_{k+\tau}^{\top}\Gamma_{k+\tau}e_{k+\tau} \mid \IC_{k}^{e},~\delta_{k}=1 ] \nonumber \\ \eq     \sum_{r=1}^{\Delta_{k+\tau-1}+1}(A^{r-1}w_{k+\tau-r})^{\top}\Gamma_{k+\tau}A^{r-1} w_{k+\tau-r}   \nonumber 
 \\ && -  \sum_{r=1}^{\tau}(A^{r-1}w_{k+\tau-r})^{\top}\Gamma_{k+\tau}A^{r-1} w_{k+\tau-r}   \nonumber \\
\eq  \tr(\Gamma_{k+\tau}\Phi(\Delta_{k+\tau})),
\end{eqnarray}  
with $\Phi(\Delta_{k+\tau})$ defined in \eqref{eq:VoI expression}. Overall, the combination of \eqref{eq:VoI1}, \eqref{eq:Vk 1} and \eqref{eq:stage cost phi} implies the expression of VoI
\begin{equation}
{\rm{VoI}}_{k}  =  -\theta_{k} +   \tr(\Gamma_{k+\tau}\Phi(\Delta_{k+\tau})    +\rho_{k}
\end{equation}
 with $ \rho_{k} =  \Eb[V_{k+1} \mid \IC_{k}^{e},~\delta_{k}=0 ] - \Eb[V_{k+1} \mid \IC_{k}^{e},~\delta_{k}=1 ] $. 
\end{proof}
To obtain a valid VoI function \eqref{eq:VoI expression}, we need to minimize the cost-to-go function  \eqref{eq:Vk} using dynamic programming. In the numerical backward induction of dynamic programming, the objective function must be computed for each combination of values. The cost-to-go function $V_{k}$ \eqref{eq:Vk} is dependent on future triggering variables ($\delta_{k+1},\delta_{k+2},\dots,\delta_{N}$) and needs to be computed
recursively, therefore, its computation is in general difficult. In order to achieve generic computational tractability and scalability, we introduce an approximation of the VoI function using Proposition 4 of \cite{soleymani2021value}, a so-called VoI proxy function. \\
\indent 
Let $\bar{\gamma} = \{\bar{\delta}_{0}, \dots, \bar{\delta}_{N} \}$ be a periodic triggering policy with $\bar{\delta}_{k} = 1$ for all $0 \le k \le N$. At each time instant $k$, we use the  sub-optimal set of policies  $\{\delta_{k}, \bar{\delta}_{k+1}, \dots, \bar{\delta}_{N}\} $ as the baseline policy to 
simplify the $\rho_{k}$ calculation, and it results in $\rho_{k} =0$.

\begin{proposition}\label{proposition:sub-optimal}
Consider the joint optimization problem \eqref{eq:LQG} and let the optimal control policy be given by \eqref{eq:controller}. The VoI proxy-based scheduling policy $\tilde{\gamma}$ is given by
\begin{eqnarray}\label{eq:aproxy VoI}
 \tilde{\delta}_{k} = \mathbf{1}_{{\rm{VoIP}}_{k}> 0 },
\end{eqnarray}
where
\begin{eqnarray}\label{eq:aproxy VoI expression}
{\rm{VoIP}}_{k} =   -\theta_{k} +   \tr(\Gamma_{k+\tau}\Phi(\Delta_{k+\tau}) )
\end{eqnarray} with $\Phi(\Delta_{k+\tau})$ defined in \eqref{eq:VoI expression}. The policy is sub-optimal but outperforms the periodical policy in tradeoff between the control performance and communication cost.  

\end{proposition}  
\begin{proof}
We need to prove $ \Psi(\tilde{\gamma}, f^{\ast}) \le \Psi(\bar{\gamma}, f^{\ast})$.
Let the cost-to-go function under sub-optimal triggering policy and periodical policy be $\tilde{V}_{k}$ and $\bar{V}_{k}$, respectively. In order to show $ \Psi(\tilde{\gamma}, f^{\ast}) \le \Psi(\bar{\gamma}, f^{\ast})$, it is enough to show  $V_{k} \le \tilde{V}_{k}$.  Assume that the claim holds for $k+1$, we have 
\begin{eqnarray*}
\tilde{V}_{k}  \eq \Eb[\theta\tilde{\delta}_{k} + e_{k+\tau}^{\top}\Gamma_{k+\tau}e_{k+\tau} + \tilde{V}_{k+1}  \mid \IC_{k}^{e} ] \\ 
 &\le&  \Eb[\theta\tilde{\delta}_{k}  + e_{k+\tau}^{\top}\Gamma_{k+\tau}e_{k+\tau} + \bar{V}_{k+1}  \mid \IC_{k}^{e} ] \\ 
 &\le& \Eb[\theta \bar{\delta}_{k} + e_{k+\tau}^{\top}\Gamma_{k+\tau}e_{k+\tau} + \bar{V}_{k+1} \mid   \IC_{k}^{e} ] = \bar{V}_{k},
\end{eqnarray*}
where $\bar{\delta}_{k} = \bar{\gamma}(\IC_{k}^{e})$. The first and second equalities result from backward induction, the first inequality is from the induction hypothesis and the second inequality is from the definition of the sub-optimal triggering policy $\tilde{\gamma}$. 
\end{proof}
The scheduling policy \eqref{eq:aproxy VoI} based on VoI proxy function \eqref{eq:aproxy VoI expression} given in Proposition \ref{proposition:sub-optimal} is no longer optimal for the optimization problem \eqref{eq:original opt}. However, in the following Corollary~\ref{corollary:AoI}, we show that the VoI proxy metric will be shown that the VoI proxy metric outperforms the periodic scheduling policy and AoI-based scheduling policy. 
\begin{remark}
Compared with previous VoI function in \cite{soleymani2021value}, which is of the form
\begin{eqnarray*}
    {\rm VoI}_{k} = e_{k}^{\top}A^{\top}\Gamma_{k+1}A_{k}e_{k} - \theta_{k} + \rho_{k},
\end{eqnarray*} 
the VoI function \eqref{eq:VoI expression} and VoI proxy function \eqref{eq:aproxy VoI expression} is reparameterized with the Gaussian noise and time information which are available at the decision maker at the sensor side. Moreover, they capture the essential variables such as delay and analytically characterize the relationship between communication delay and the formulated optimization problem \eqref{eq:original opt}. These observations facilitate the VoI metric to be applied in a large-scale NCS.

\end{remark}
The result of performance guarantee in Proposition \ref{proposition:sub-optimal} also applies for comparison between VoI proxy-based  triggering policy \eqref{eq:aproxy VoI} and AoI-based triggering policy proposed in \cite{soleymani2021value}. This introduces our next corollary.
\begin{corollary}\label{corollary:AoI}
Let $\hat{\gamma}$ be the AoI-based  triggering policy given
in \cite{soleymani2021value}, under the fixed optimal control law as in \eqref{eq:controller} of Lemma \ref{lemma:certainty eq ctrl}, the sub-optimal triggering policy $\tilde{\gamma}$ given in \eqref{eq:aproxy VoI} outperforms the AoI-based triggering policy $\hat{\gamma}$ in optimization problem \eqref{eq:original opt}.
\end{corollary}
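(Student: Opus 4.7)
The plan is to adapt the backward induction from the proof of Proposition~\ref{proposition:sub-optimal}, replacing the periodic policy $\bar{\gamma}$ with the AoI-based triggering policy $\hat{\gamma}$ of \cite{soleymani2019stochastic}. The critical observation is that the VoI proxy decision $\tilde{\delta}_{k}$ is defined by the pointwise minimization rule \eqref{eq:aproxy VoI}, so it dominates any alternative triggering decision with respect to the one-step stage cost $\theta_{k}\delta_{k} + e_{k+\tau}^{\top}\Gamma_{k+\tau}e_{k+\tau}$, regardless of how that alternative decision is computed from $\IC_{k}^{e}$. This is exactly the structural ingredient that made the comparison against $\bar{\gamma}$ work, and it does not care about the specific form of the competing policy.

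First I would introduce the cost-to-go under the AoI-based policy, $\hat{V}_{k}=\Eb[\sum_{t=k}^{T-\tau}\theta_{t}\hat{\delta}_{t}+e_{t+\tau}^{\top}\Gamma_{t+\tau}e_{t+\tau}\mid \IC_{k}^{e}]$ with $\hat{\delta}_{k}=\hat{\gamma}(\IC_{k}^{e})$, and similarly $\tilde{V}_{k}$ under $\tilde{\gamma}$. At the terminal index the two cost-to-go functions agree trivially, providing the base of the induction. For the inductive step, assume $\tilde{V}_{k+1}\le \hat{V}_{k+1}$ almost surely. Expanding $\tilde{V}_{k}$ via the Bellman-type decomposition \eqref{eq:Vk 1} and applying the hypothesis yields $\tilde{V}_{k}\le \Eb[\theta_{k}\tilde{\delta}_{k}+e_{k+\tau}^{\top}\Gamma_{k+\tau}e_{k+\tau}+\hat{V}_{k+1}\mid \IC_{k}^{e}]$. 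By construction $\tilde{\delta}_{k}$ minimizes $\Eb[\theta_{k}\delta_{k}+e_{k+\tau}^{\top}\Gamma_{k+\tau}e_{k+\tau}\mid \IC_{k}^{e}]$ over $\delta_{k}\in\{0,1\}$ through the sign test on $\mathrm{VoIP}_{k}$, hence this stage cost evaluated at $\tilde{\delta}_{k}$ cannot exceed its value at $\hat{\delta}_{k}$. Substituting gives $\tilde{V}_{k}\le \hat{V}_{k}$, closing the induction, and normalizing by $T+1$ translates the inequality into $\Psi(\tilde{\gamma},f^{\ast})\le \Psi(\hat{\gamma},f^{\ast})$, which is the claim.

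The step I expect to be the main obstacle is justifying that the continuation term $\hat{V}_{k+1}$ can be separated from the current decision $\delta_{k}$ when we perform the myopic minimization; i.e.\ that the difference $\Eb[\hat{V}_{k+1}\mid\IC_{k}^{e},\delta_{k}=0]-\Eb[\hat{V}_{k+1}\mid\IC_{k}^{e},\delta_{k}=1]$ is zero under the approximation that underlies the VoI proxy. This is precisely the collapse $\rho_{k}=0$ invoked at the start of the proof of Proposition~\ref{proposition:sub-optimal}, inherited from \cite{soleymani2019value}; once the same approximation is adopted uniformly for both $\tilde{\gamma}$ and $\hat{\gamma}$, the myopic comparison is valid and the remainder of the argument is routine. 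A subsidiary, but easier, point is that $\hat{\gamma}$ is admissible in the sense of Lemma~\ref{lemma:certainty eq ctrl}, i.e.\ it is a function only of the primitive random variables, so that the certainty equivalent controller $f^{\ast}$ remains optimal under $\hat{\gamma}$ and the decomposition \eqref{eq:LQG error} still holds; this is immediate from the construction of the AoI policy in \cite{soleymani2019stochastic}.
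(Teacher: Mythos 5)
Your proposal is correct and follows essentially the same route as the paper: a backward induction on the cost-to-go functions, using the induction hypothesis to replace $\tilde{V}_{k+1}$ by $\hat{V}_{k+1}$ and the stage-wise optimality of $\tilde{\delta}_{k}$ (the sign test on $\mathrm{VoIP}_{k}$) to replace $\tilde{\delta}_{k}$ by $\hat{\delta}_{k}$, which gives $\tilde{V}_{k}\le\hat{V}_{k}$ and hence $\Psi(\tilde{\gamma},f^{\ast})\le\Psi(\hat{\gamma},f^{\ast})$. Your explicit remarks on the $\rho_{k}=0$ approximation underlying the myopic comparison and on the admissibility of $\hat{\gamma}$ are points the paper leaves implicit, but they do not change the argument.
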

\begin{proof}
Let the cost-to-go function under AoI-based scheduling policy be $\hat{V}_{k}$. We need to prove $ \Psi(\tilde{\gamma}, f^{\ast}) \le \Psi(\hat{\gamma}, f^{\ast})$. Assume that the claim holds for $k+1$, we have 
\begin{eqnarray*}
\tilde{V}_{k} \eq \Eb[\theta\tilde{\delta}_{k} + e_{k+\tau}^{\top}\Gamma_{k+\tau}e_{k+\tau} + \tilde{V}_{k+1} \mid \IC_{k}^{e} ] \\ 
 &\le&  \Eb[\theta\tilde{\delta}_{k} + e_{k+\tau}^{\top}\Gamma_{k+\tau}e_{k+\tau} + \hat{V}_{k+1}  \mid \IC_{k}^{e} ] \\ 
 &\le& \Eb[\theta \hat{\delta}_{k} + e_{k+\tau}^{\top}\Gamma_{k+\tau}e_{k+\tau} + \hat{V}_{k+1} \mid   \IC_{k}^{e} ] = \hat{V}_{k},
\end{eqnarray*}
where $\hat{\delta}_{k} = \hat{\gamma}(\IC_{k}^{e})$. The first and second equalities come from backward induction, and the derivation explanation is same as in the proof of Proposition.~\ref{proposition:sub-optimal}. 
\end{proof}

\section{Simulation Result}\label{sec:simulation}
In this section, we provide a numerical simulation to validate the effectiveness of our proposed VoI-based scheduling policy. In terms of dynamics \eqref{eq:plant}, we choose a two-dimension linear system with system matrices $A = {\rm diag} \{1.15,1.1\}$, $B =  {\rm diag} \{0.1, 0.1\}$, and the Gaussian noise covariance is $W =  {\rm diag}\{0.001, 0.0001\}$. The constant delay  is set as $\tau = 2$. 
Regarding the weighting coefficients in LQG cost function \eqref{eq:original opt}, we choose $Q_{k} =Q_{T+1}= {\rm diag} \{1,1\}$, and $R_{k} =R_{T+1}= {\rm diag} \{1,1\}$ for all $k$. The time horizon is chosen as $T=150$.  The simulated average transmission rate is defined as $r  = \frac{1}{T+1}\sum_{t=0}^{T}\delta_{k} $, and the control performance is valued by the average mean square error $ \frac{1}{T+1}\sum_{t=0}^{T}\|e_{k}\|^{2}$. We run Monte Carlo simulations with 1000 trials. 
\begin{figure}[htbp]
    \centering
    \includegraphics[width=0.45\textwidth]{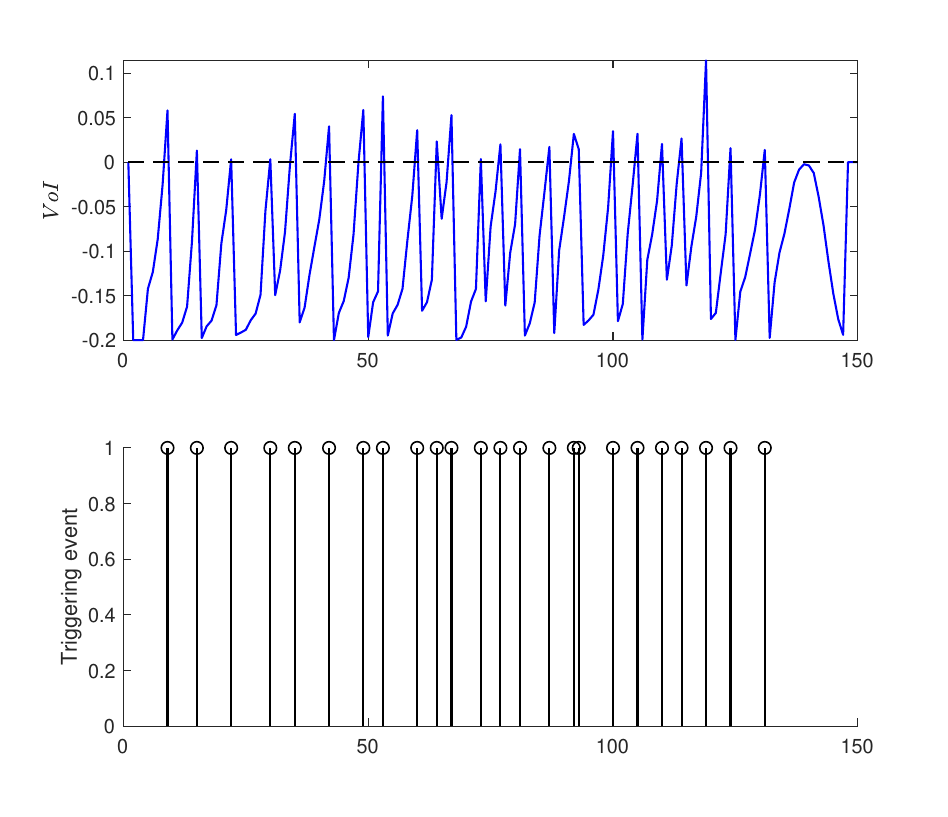}
    \caption{From up to bottom: the trajectory of VoI proxy function \eqref{eq:aproxy VoI expression} and the triggering events of the system.} \label{fig:VoI}
\end{figure} 
\begin{figure}[htbp]
    \centering
    \includegraphics[width=0.45\textwidth]{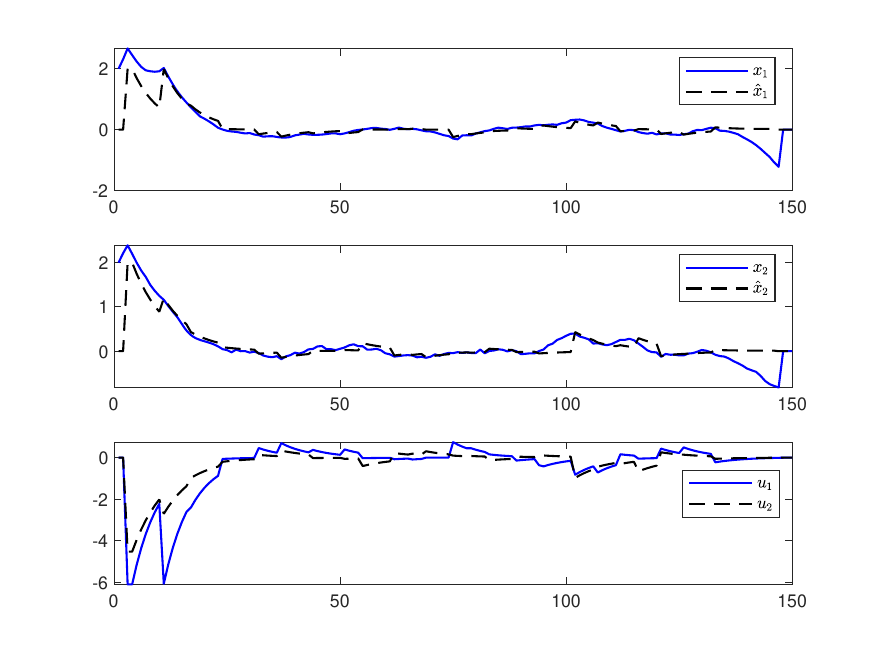}
    \caption{System states and control signals under VoI proxy-based scheduling policy \eqref{eq:aproxy VoI}.} \label{fig:System dynamics}
\end{figure}  
\begin{figure}[htbp]
    \centering
    \includegraphics[width=0.45\textwidth,height =0.37\textwidth
    ]{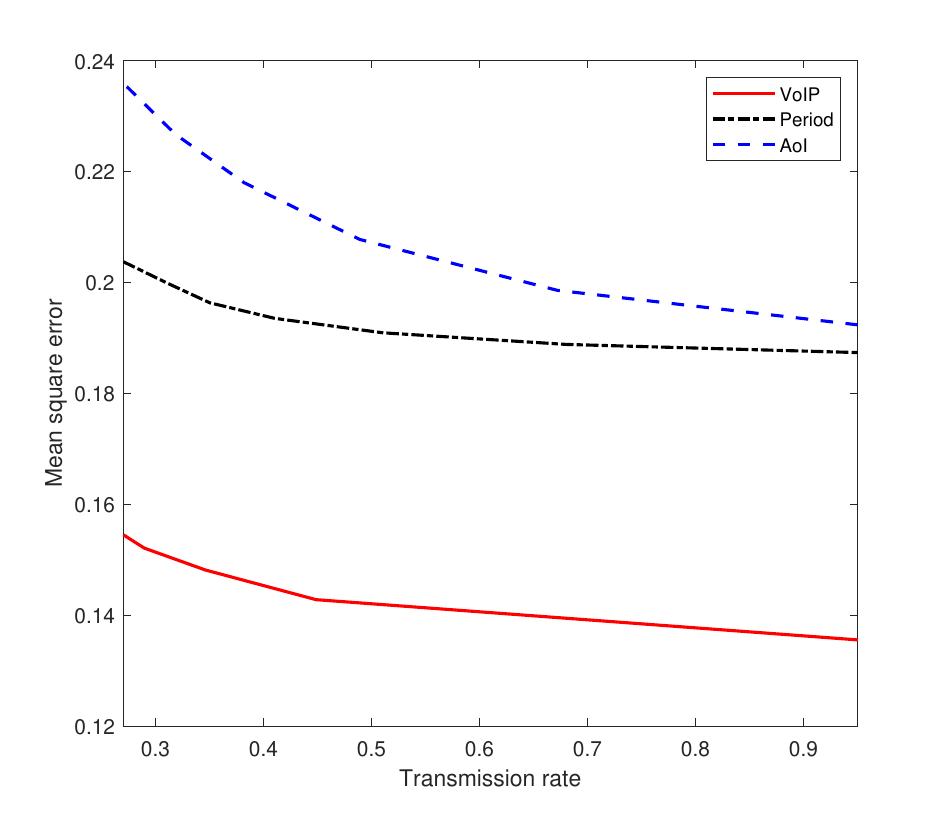}
    \caption{Trade-off between the control performance and transmission rate.} \label{fig:Trade-off}
\end{figure}

Fig.~\ref{fig:VoI} depicts the VoI proxy trajectory \eqref{eq:aproxy VoI expression} and triggering events of the networked control system under the transmission cost $ \theta  = 0.3 $. The triggering occurs only when the VoI function is positive. Under the VoI proxy-based scheduling policy \eqref{eq:aproxy VoI}, the system states  $[x_{1}$ $x_{2}]^{\top}$, their estimation $[\hat{x}_{1}$ $\hat{x}_{2}]^{\top}$ and the optimal LQG control actions  $[u_{1}$, $u_{2}]^{\top}$ are shown in Fig.~\ref{fig:System dynamics}. \\
\indent 
In Fig.~\ref{fig:Trade-off}, we compare the trade-offs between control performance and transmission rate of system under different scheduling policies. They are VoI proxy-based \eqref{eq:aproxy VoI}, periodical and AoI-based scheduling policies \cite{soleymani2021value}, respectively. 
The transmission rate under VoI-based triggering policy decreases with the increasing transmission cost $ \theta $, therefore we obtained the trade-off curve of the control performance with respect to the transmission rates. In order to obtain the same transmission rate range, the transmission costs $\theta$ is chosen from $ 0$ to $0.3$ with the step size of $0.03$. In periodical triggering policy case, the periods are chosen as $T=\{2, 3, 4, 5,6,7\}$. It can be observed that the VoI proxy-based triggering policy leads to a lower average mean square error under the same transmission rates compared with the rest scheduling policies. 


\section{CONCLUSIONS}\label{sec:conclusion}
In this paper, by addressing the trade-off between control performance and communication resource consumption for a NCS, we analytically characterized the relationship between quality of control and VoI function. The derived VoI functions properly reflect the relevance of information including temporal aspects for the control task and is parameterized by network coupling variables such as transmission delay.  
The data packet is transmitted through the network whenever the value of information is positive to preserve the control tasks. Finally, numerical simulation is provided to verify the effectiveness of the proposed VoI-based scheduling policy.

\addtolength{\textheight}{-12cm}   

\end{document}